\newtheorem{Theorem}{Theorem}[section]
\newtheorem{Corollary}[Theorem]{Corollary}
\newtheorem{claim}[Theorem]{Claim}
\newtheorem{Definition}[Theorem]{Definition}
\def\eps{\varepsilon}
\DeclareMathOperator\gen{gen}
\DeclareMathOperator\Prob{Prob}
\def\defeq{\buildrel{\mbox{\footnotesize def}}\over=}
\renewcommand\phi{\varphi}
\newcommand\I{\mathbb I}
\newenvironment{disp}{$$}{$$\ignorespaces}
\newenvironment{itemz}[1][3pt]{%
\setitemize{topsep=#1,noitemsep,leftmargin=\parindent,labelwidth=\parindent,labelsep=0pt,align=parleft}%
\begin{itemize}}{\end{itemize}}
\def\correcthref{\hyper@anchor{\@currentHref}}
\def\contact#1#2#3#4{%
\gdef\makecontacts{\vskip 3mm \par \noindent #1 #2,\, #3\par e-mail: #4}}
\newcommand\keywords[1]{\par\vskip 2mm\noindent{\small{\sl Keywords: }
\begin{minipage}[t]{87mm}#1\end{minipage}}
\vskip 2mm}
\newcommand\classification[1]{\noindent{\small {\sl Classification: }
\begin{minipage}[t]{77mm}#1\end{minipage}}}
\begin{document}
\pagestyle{myheadings}
\title{Infinite Probabilistic Secret Sharing}
\author{Laszlo Csirmaz}
\date{}
\contact{Laszlo}{Csirmaz}{%
Alfred Renyi Institute of Mathematics, Budapest, and
Institute of Information Theory and Automation, Prague}
{csirmaz@renyi.hu}
\markboth{L.~Csirmaz}{L.~Csirmaz: Infinite Probabilistic Secret Sharing}
\maketitle

\renewcommand\v{\mathbf v}

\begin{abstract}
A probabilistic secret sharing scheme is a joint probability distribution of
the shares and the secret together with a collection of secret recovery
functions. The study of schemes using arbitrary probability spaces and
unbounded number of participants allows us to investigate their abstract
properties, to connect the topic to other branches of mathematics, and to
discover new design paradigms. A scheme is perfect if unqualified subsets
have no information on the secret, that is, their total share is independent
of the secret. By relaxing this security requirement, three other scheme
types are defined. Our first result is that every (infinite) access
structure can be realized by a perfect scheme where the recovery functions
are non-measurable. The construction is based on a paradoxical pair of
independent random variables which determine each other. Restricting the
recovery functions to be measurable ones, we give a complete
characterization of access structures realizable by each type of the
schemes. In addition, either a vector-space or a Hilbert-space based scheme
is constructed realizing the access structure. While the former one uses the
traditional uniform distributions, the latter one uses Gaussian
distributions, leading to a new design paradigm.
\keywords{secret sharing; abstract probability space; Sierpi\'nski
topology; product measure; span program; Hilbert space program}
\classification{60B05, 94A62, 46C99, 54D10}
\end{abstract}


\section{Introduction}

The topic of this paper is secret sharing schemes where the domain of the
secret, the domain of the shares, or the set of players is not necessarily
finite. This type of approach, namely studying infinite objects instead of
finitary ones, is not novel even in the realm of cryptography, see, e.g.,
\cite{blakley-swanson, ChorKush, Makar:1980, patarin:infinite, Vaudenay:09}.
Further motivation and several examples can be found in
\cite{dibert-csirmaz:examples}. As can be expected, even finding the right
definition can be hard and far from trivial. We elaborate on this issue in
Section \ref{sec:conclusion}.

Secret sharing has several faces; it can be investigated equally from either
combinatorial or probabilistic point of view, see the survey paper
\cite{beimelsurvey}. The combinatorial view leads to \emph{set theoretical
generalizations} which are discussed in \cite{dibert-thesis}. In this paper
we take the probabilistic view and consider a secret sharing scheme as the
(joint) probability distribution of the shares and the secret. Defining
probability measures on arbitrary product spaces is not without problem, see
\cite{delarue, haezendonck, rokhlin} for a general description of the
problems, especially how and when the conditional distribution can be
defined. Our definitions avoid referring to conditional distributions at the
expense of a less transparent and less intuitive formulation. In Sections
\ref{sec:prob.secret.sharing} and \ref{sec:prod.spaces} we give all
necessary definitions from probability theory that will be used later on.
Nevertheless, a good working knowledge of measure theory and probability
spaces, as can be found, e.g., in \cite{kallenberg}, definitely helps.

A basic requirement in secret sharing -- usually called \emph{correctness}
-- is that qualified subsets, joining their shares, should be able to
recover the secret. The most straightforward way to ensure this property is
via \emph{recovery functions:} for each qualified subset $A$ there is a
function $h_A$ which, given the shares of members of $A$, returns the value
of the secret. In the classical case high complexity recovery functions can
only make the scheme more efficient. Quite surprisingly, this is not true in
general. In Section \ref{sec:non-measurable} we present a scheme in which
every share determines the secret, while, at the same time, every collection
of the shares is independent of the secret. This latter property is
interpreted as that the shares give ``no information'' on the secret, and
considered to be the strongest security requirement. Such a pathological
situation can be avoided by requiring the recovery functions to be
\emph{measurable}. This is exactly what we do: we focus on \emph{measurable}
schemes where all recovery functions are measurable.

Depending on how much information an unqualified subset might have on the
secret -- the \emph{security requirement} --, we define four scheme types. In
a \emph{perfect scheme} unqualified subsets should have no information at
all, meaning that the conditional distribution of the secret, given the
shares of the subset, is the same as the unconditional distribution. The
scheme is \emph{weakly perfect}, if, for some constant $c\ge 1$, the ratio of
the conditional and unconditional probabilities is always between $1/c$ and
$c$. A perfect scheme is a weakly perfect scheme with $c=1$. Weakly perfect
schemes were introduced in \cite{ChorKush} where they were called
``$c$-schemes.''

The scheme is \emph{ramp} when the constant which bounds the ratio of
conditional and unconditional probabilities is not necessarily uniform but
might depend on the unqualified set (but not on the value of the actual
shares). Finally the scheme is \emph{weakly ramp} if the constant $c$ might
depend on the actual values of the shares as well. The last case can be
rephrased as unqualified subsets cannot exclude any secret value with
positive probability.

In Sections \ref{sec:main} and \ref{sec:ramp} we characterize access
structures which can be realized by schemes of these types. We have both
topological and structural characterizations. Subsets of the set $P$ of the
participants can be considered as elements of the product $\{0,1\}^P$,
therefore an access structure -- the collection of qualified sets -- is a
subset of this space. Equipping $\{0,1\}^P$ with some topology we can speak
about the topological properties of the access structure. The \emph{Sierpi\'nski topology} \cite {Watson1990299} is especially promising. If $P$
is finite, a collection of subsets of $P$, as a subset of the topological
space $\{0,1\}^P$ is open if and only if the collection is upward closed,
which is a natural requirement for access structures. For definitions and
examples for this topology, see Section \ref{sec:sierpinski}. We prove that
a scheme can be realized by a perfect or weakly perfect scheme if and only
if it is an open set in this topology. Moreover, a scheme can be realized by
a ramp or a weakly ramp scheme if and only if it is $G_\delta$, that is, it
is the intersection of countably many open sets.

The structural characterization uses \emph{span programs} introduced in
\cite{spanprogram} and its generalization, \emph{Hilbert-space programs}. In
a span program we are given a vector space, a target vector, and every
participant is assigned one or more vectors. A structure realized by the
span program consists of those subsets of participants whose vectors span a
linear space containing the target vector. In a Hilbert-space program the
vector space is replaced by a Hilbert space, and a subset is qualified if the
target vector is in the closure of the linear span of their vectors. We
prove that exactly the open structures are realizable by span programs, and
exactly the $G_\delta$ structures are realizable by Hilbert-space programs.

Finally Section \ref{sec:conclusion} concludes the paper where we show that
not every access structure is realizable by a measurable scheme, discuss
additional scheme types, and list some open problems.


\section{Definitions}

This section defines access structures, then continues with Sierpi\'nski
topology and some basic properties of this topology. The definition of
probability secret sharing scheme is followed by properties of probability
measures on product spaces. Finally four scheme types are defined
corresponding to different security requirements. Motivations and examples
are omitted, they can be found, e.g., in \cite{dibert-csirmaz:examples}.


\subsection{Access structure}

An access structure $\mathcal A\subset 2^P$ is a non-trivial upward closed
(or monotone) family of subsets of the set $P$ of participants. To avoid
trivialities, an access structure does not contain singletons, and is not
empty. Given the collection $\mathcal A_0\subset 2^P$, the access structure
\emph{generated by $\mathcal A_0$} is
\begin{disp}
\gen(\mathcal A_0) \defeq  \{ A\subseteq P : \mbox{
$B\subseteq A$ for some $B\in\mathcal A_0$}\,\}.
\end{disp}
By monotonicity, an access structure is determined uniquely by any of its
generators. The access structure $\mathcal A$ is \emph{finitely generated} if
generated by a collection of finite subsets of $P$.


\subsection{Sierpi\'nski topology}\label{sec:sierpinski}

The \emph{Sierpi\'nski space} is a topological space defined on the two
element set $2=\{0,1\}$, where the open sets are the empty set, $\{1\}$, and
$\{0,1\}$. This topology is $T_0$, but not $T_1$, and is universal in the
sense that every $T_0$ space can be embedded into a high enough power, see
\cite{Watson1990299}. As only the Sierpi\'nski topology is used,
all topological notions in this paper refer to this topology.
The elements of the product topological space $\{0,1\}^P$ are the
characteristic functions of subsets of $P$, so its points can be identified
with the subsets of $P$. Consequently a collection $\mathcal A$ of subsets of
$P$ naturally corresponds to a subset of $\{0,1\}^P$. The following claim is
an easy consequence of the definition of the product topology.

\begin{claim}\label{claim:opentopology}
The collection $\mathcal A\subseteq 2^P$ is open in $\{0,1\}^P$ if and only if it
is a finitely generated monotone structure. \qed
\end{claim}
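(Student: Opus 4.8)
The plan is to unwind the definition of the product topology on $\SP^P$, read off its basic open sets explicitly, and then observe that both implications fall out immediately. First I would recall that the product topology on $\SP^P$ has as a subbase the sets $\pi_p^{-1}(U)$, where $\pi_p\colon \SP^P\to\SP$ is the projection onto the coordinate $p\in P$ and $U$ ranges over the open subsets of $\SP$. Since the only open subset of $\SP$ that is neither empty nor the whole space is $\{1\}$, the only nontrivial subbasic open sets are $\pi_p^{-1}(\{1\})=\{A\subseteq P : p\in A\}$.

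Next I would form a base by taking finite intersections of these subbasic sets. For a finite set $F=\{p_1,\dots,p_n\}\subseteq P$ this gives
$$
\bigcap_{i=1}^n \pi_{p_i}^{-1}(\{1\}) = \{A\subseteq P : F\subseteq A\},
$$
which I will denote $U_F$. Thus the basic open sets of $\SP^P$ are exactly the ``cones'' $U_F$ for finite $F\subseteq P$, and every open set is a union of such cones.

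With this description in hand the two directions are short. For the forward direction, suppose $\mathcal A$ is open, so $\mathcal A=\bigcup_{i} U_{F_i}$ for some family of finite sets $F_i$. Then $A\in\mathcal A$ holds precisely when $F_i\subseteq A$ for some $i$, that is, $\mathcal A=\gen(\{F_i\})$; hence $\mathcal A$ is monotone and generated by a family of finite sets. For the converse, if $\mathcal A=\gen(\mathcal A_0)$ with every $B\in\mathcal A_0$ finite, then $A\in\mathcal A$ iff $B\subseteq A$ for some $B\in\mathcal A_0$, so $\mathcal A=\bigcup_{B\in\mathcal A_0} U_B$ is a union of basic open sets and therefore open.

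The only point requiring care --- and the one I would treat as the main obstacle, modest as it is --- is the identification of the basic open sets. One must use that $\{1\}$ is the unique nontrivial open set of $\SP$, so that the subbasic conditions are all of the form $p\in A$ and never of the form $p\notin A$, and that basic open sets arise from \emph{finite} intersections only, which is exactly what forces the generators to be finite. Getting these two features right is what matches ``open'' with ``finitely generated monotone,'' rather than, say, with arbitrary monotone structures or with structures constrained by conditions on coordinate complements.
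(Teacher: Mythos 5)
Your proof is correct and is exactly the argument the paper has in mind: the paper omits the proof, remarking only that the claim ``is an easy consequence of the definition of the product topology,'' and your unwinding of the subbase $\pi_p^{-1}(\{1\})$ into the finite cones $U_F$ is precisely that easy consequence. The point you flag as the crux --- that $\{1\}$ is the only nontrivial open set of $\SP$ (so no ``$p\notin A$'' conditions arise) and that basic open sets come from finite intersections (forcing finite generators) --- is indeed the whole content of the claim.
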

\noindent
In particular, if $P$ is finite, then a non-trivial $\mathcal A\subset 2^P$
is an access structure if and only if it is open.
\begin{Definition}\label{def:gdelta}\upshape
A set $\mathcal A\subseteq 2^P$ is $G_\delta$ if it is the intersection of
countably many open sets.
\end{Definition}

\begin{claim}\label{claim:gdelta}
$\mathcal A\subseteq 2^P$ is $G_\delta$ if and only if there are families
$\mathcal B_1\supseteq \mathcal B_2\supseteq\cdots$ consisting
of finite subsets of $P$ such that
\begin{disp}
    A\in\mathcal A ~~\Leftrightarrow~~ A\in\gen(\mathcal B_i)~~
            \mbox{for all $i$,}
\end{disp}
or, in other words, $\mathcal A = \bigcap_i \gen(\mathcal B_i)$.
\end{claim}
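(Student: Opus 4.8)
The plan is to reduce everything to Claim~\ref{claim:opentopology}, which identifies the open subsets of $\SP^P$ with the finitely generated monotone structures, i.e. with the sets of the form $\gen(\mathcal B)$ for $\mathcal B$ a family of finite subsets of $P$. Granting this, both directions of the equivalence become essentially bookkeeping about generators; the only real content is arranging for the generating families to be \emph{decreasing}.

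First I would dispatch the easy direction. If $\mathcal A = \bigcap_i \gen(\mathcal B_i)$ with each $\mathcal B_i$ a family of finite subsets of $P$, then by Claim~\ref{claim:opentopology} each $\gen(\mathcal B_i)$ is open, so $\mathcal A$ is a countable intersection of open sets and hence $G_\delta$ by Definition~\ref{def:gdelta}. This direction does not use the nesting $\mathcal B_1\supseteq\mathcal B_2\supseteq\cdots$ at all.

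For the converse, suppose $\mathcal A = \bigcap_i \mathcal O_i$ with each $\mathcal O_i$ open. The first move is to replace $\mathcal O_i$ by the partial intersection $\mathcal O_1\cap\cdots\cap\mathcal O_i$; these are still open (finite intersections of open sets), they form a decreasing chain, and their overall intersection is unchanged, so I may assume $\mathcal O_1\supseteq\mathcal O_2\supseteq\cdots$. The second, crucial, move is to equip each $\mathcal O_i$ with its \emph{canonical} generator $\mathcal B_i\defeq\{\,B\subseteq P : B\mbox{ finite and }B\in\mathcal O_i\,\}$, the family of all finite members of $\mathcal O_i$. Using that $\mathcal O_i$ is monotone and finitely generated, I would check $\gen(\mathcal B_i)=\mathcal O_i$: the inclusion $\subseteq$ follows from upward closure of $\mathcal O_i$, and $\supseteq$ holds because every finite generator guaranteed by Claim~\ref{claim:opentopology} already lies in $\mathcal B_i$.

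The payoff of this canonical choice is that it is monotone in $\mathcal O_i$: since $\mathcal O_{i+1}\subseteq\mathcal O_i$, every finite set belonging to $\mathcal O_{i+1}$ already belongs to $\mathcal O_i$, so $\mathcal B_{i+1}\subseteq\mathcal B_i$ automatically. Thus the $\mathcal B_i$ form the required decreasing chain of finite families and $\mathcal A=\bigcap_i\gen(\mathcal B_i)$, completing the argument. I expect the only point demanding care to be the verification $\gen(\mathcal B_i)=\mathcal O_i$ — that passing to the family of \emph{all} finite members does not enlarge or shrink the generated structure — which is exactly where the ``finitely generated'' half of Claim~\ref{claim:opentopology} is used; the decreasing property then comes for free from this canonical choice rather than from any clever combinatorial construction.
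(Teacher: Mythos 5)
Your proposal is correct and follows essentially the same route as the paper: pass to the decreasing partial intersections, take as $\mathcal B_i$ the family of all finite members of the $i$-th open set, and observe that nesting of the open sets gives nesting of these canonical generating families for free. The verification $\gen(\mathcal B_i)=\mathcal O_i$ via Claim~\ref{claim:opentopology} is exactly the step the paper relies on (stated there as ``As $V_i$ is open, $V_i=\gen(\mathcal B_i)$'').
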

\begin{proof}
As $\mathcal B_i$ has only finite elements, $\gen(\mathcal B_i)$ is open,
and then $\bigcap_i \gen(\mathcal B_i)$ is $G_\delta$.

In the other direction, assume $\mathcal A = \bigcap U_i$, where $U_i$ is
open. Define $V_i = \bigcap \{ U_j: j\le i\}$, and put
\begin{disp}
    \mathcal B_i \defeq \{ A\subseteq P:\, A\in V_i,~~\mbox{$A$ is
finite}\,\}.
\end{disp}
As $V_i$ is open, $V_i=\gen(\mathcal B_i)$, and, of course, $\mathcal A =
\bigcap_i V_i$ as well. Moreover $V_{i+1}\subseteq V_i$, thus $V_i$ contains
every finite set that $V_{i+1}$ does.
\end{proof}

As an example, suppose $P$ is infinite and let $\mathcal A$ be
the family of all infinite subsets of $P$. Then $\mathcal A$ is not open,
but it is $G_\delta$ as it is the intersection of the families generated by
the $n$-element subsets of $P$ -- all of which are open.

For another example let $A_1$, $A_2$, $\dots$ be disjoint infinite subsets
of $P$ and let $\mathcal A$ be the family generated by these subsets. Then
$\mathcal A$ is upward closed, but it is not $G_\delta$. To show this,
suppose otherwise, and let $\mathcal B_1\supseteq \mathcal
B_2\supseteq\cdots$ be the families as in Claim \ref{claim:gdelta}. As
$A_i\in\mathcal A \subseteq \gen(\mathcal B_i)$, there is a (finite)
$B_i\in\mathcal B_i$ with $B_i\subseteq A_i$. Consider the set
$B=\bigcup_i B_i$. Clearly $B\in\gen(\mathcal B_i)$ as $B_i\in\mathcal B_i$
is a subset of $B$, thus $B\in\bigcap_i\gen(\mathcal B_i)$. On the other
hand, $B\cap A_i = B_i$ is finite thus $B$ does not extend any $A_i$, and
therefore it is not an element of $\mathcal A$.

In the third example we have countably many \emph{forbidden} subsets $F_1$,
$F_2$, $\dots$, and $\mathcal A$ consists of those subsets which are not
covered by any of the forbidden sets:
\begin{disp}
   \mathcal A = \{ A\subseteq P:\, A\not\subseteq F_i, ~~ i=1,2,\dots\,\}.
\end{disp}
$\mathcal A$ is obviously upward closed, and it is also $G_\delta$. To
conclude so, it is enough to show that 
\begin{disp}
    \mathcal A_n = \{A\subseteq P:\, A\not\subseteq F_i, ~~ i=1,2,\dots,n\,\}
\end{disp}
is open, as clearly $\mathcal A = \bigcap_i\mathcal A_i$. But $A\in\mathcal
A_n$ iff $A$ has a point outside $F_1$, a point outside $F_2$, \dots, a
point outside $F_n$. That is, $A$ has a subset with at most $n$ elements
which is also in $\mathcal A_n$. Therefore $\mathcal A_n$ is finitely
generated, that is, it is open.


\subsection{Probabilistic secret sharing scheme}\label{sec:prob.secret.sharing}

A \emph{secret sharing scheme} is a method to distribute some kind of
information among the participants $P$ so that qualified subsets could
recover the secret's value from their shares -- the scheme is
\emph{correct}, while forbidden subsets have no, or limited information on
the secret -- the \emph{security} requirement. In probabilistic schemes the
shares and the secret come from a (joint) probability distribution on the
product space of the corresponding domains.

\begin{Definition}\label{def:sss-domain}\upshape
The \emph{domain of secrets} is $X_s$, and the \emph{domain of shares} of
the participant $i\in P$ is $X_i$. We always assume that none of these sets
is empty, and $X_s$ has at least two elements, i.e. there is indeed a secret
to be shared.
\end{Definition}

To make our notation simpler, we denote $P\cup\{s\}$ by $I$ for the set of
\emph{indices}. If $A\subseteq P$ then $As$ denotes the set $A\cup\{s\}$, in
particular, $I=Ps$. We put $X = \prod_{i\in I} X_i$, and for a
subset $J\subseteq I$ we let $X^J=\prod_{i\in J}X_i$ be the restriction
of $X$ into coordinates in $J$.

\smallskip

Informally, a probabilistic secret sharing scheme is a probability
distribution on the set $X$ together with a collection of recovery
functions. Equivalently, it can be considered as a collection of random
variables $\{\xi_i: i\in I\}$ with some joint distribution so that $\xi_i$
takes values from $X_i$. The \emph{share} of participant $i\in P$ is the
value of $\xi_i$, and the \emph{secret} is the value of $\xi_s$.

\begin{Definition}\label{def:probabilistic-sss}\upshape
A \emph{probabilistic secret sharing scheme} is a pair $\mathcal S = \langle
\mu,h\rangle$, such that $\mu$ is a probability measure on the product space
$X = \prod_{i\in I} X_i$, where $I=P\cup\{s\}$, $X_s$ is the set of
(possible) secrets, and $X_i$ is the set of (possible) shares for
participant $i\in P$; and $h$ is the collection of \emph{recovery
functions}: for each $A\subseteq P$ the deterministic function $h_A:
X^A\to X_s$ gives a secret value given the shares of members of $A$.
\end{Definition}

When the dealer uses the scheme $\mathcal S = \langle\mu,h\rangle$, she
chooses an element $x\in X$ according to the given distribution $\mu$, sets
the secret to be $\xi_s = x(s)$, the $s$-coordinate of $x$, and send
privately the participant $i\in P$ the share $\xi_i = x(i)$, the $i$-th
coordinate of $x$. When members of $A\subseteq P$ want to recover the
secret, they use the recovery function $h_A$ on their shares to pinpoint the
secret value.

The scheme is \emph{correct} if qualified subsets recover the secret value,
at least with probability 1. To formalize this notion, we look at the
distribution of shares of $A$ and the secret value computed by the recovery
function $h_A$. When $x\in X$ is a distribution of all shares, the
\emph{projection $\pi_A(x)$} is its restriction to coordinates (indices) in
$A$, and the recovery function gives the secret value $h_A(\pi_A(x))\in
X_s$. So the probability of those sequences $x$ for which $x_s$ equals this
value must be 1.

\begin{Definition}\label{def:realizing-structure}\upshape
The scheme $\mathcal S=\langle\mu,h\rangle$ is \emph{correct} for the access
structure $\mathcal A\subset 2^P$ if for all $A\in\mathcal A$,
\begin{disp}
   \mu\big(\{x\in X:\, h_A(\pi_A(x)) = x_s \}\big) = 1.
\end{disp}
\end{Definition}

In a correct scheme the recovery functions of qualified sets are determined
almost uniquely. Indeed, let $h_A$ and $h^*_A$ be two correct recovery
functions. The set of those points where $h_A$ and $h^*_A$ differ is a
subset of
\begin{disp}
  \{ x\in X:\, h_A(\pi_A(x))\not=x_s\} \cup \{ x\in
X:\,h^*_A(\pi_A(x))\not=x_s \},
\end{disp}
and both sets have measure zero. It follows that the recovery functions form
a \emph{coherent family} in the following sense: if $A$ is qualified and
$A\subseteq B$, then $h^*_B(y) = h_A(\pi_A(y))$ is also a correct recovery
function, thus it must be equal to $h_B$ almost everywhere.

A secret sharing scheme must also provide \emph{security}, meaning that
unqualified subsets should have no or limited information on the secret. As
the precise definition requires some preparations from Probability Theory,
we postpone it to Section \ref{subsec:sss-types}.


\subsection{Probability measure on product spaces}\label{sec:prod.spaces}

As usual in probability theory \cite{kallenberg}, the definition of a
probability measure $\mu$ on the product space $X=\prod_i X_i$ requires a
$\sigma$-algebra $\Sigma$ on $X$. Let $J$ be a subset of $I$, then $X^J =
\prod_{i\in J} X_i$. A \emph{cylinder} is a set of the form $C=U\times
\prod_{i\notin J}X_i$ where $U \subseteq \prod_{i\in J}X_i$ is the
\emph{base} of the cylinder, and $J$ is its \emph{support}. Let moreover
\begin{disp}
  \Sigma^J = \big\{ E\subseteq X^J:
             (E\times \prod_{i\not\in J}X_i) \in \Sigma \big\} .
\end{disp}
It is easy to check that $\Sigma^J$ is a $\sigma$-algebra on $X^J$ if
$\Sigma$ is a $\sigma$-algebra on $X$. For each $J\subset I$ the
\emph{projection function $\pi_J$} maps the element $x\in X$ into $X^J$
keeping those coordinates of $x$ which are in $J$. With this notation a
subset $E$ of $X_J$ is in $\Sigma^J$ if and only if its inverse image under
$\pi_J$ is in $\Sigma$, namely, if $\pi^{-1}_J(E)\in\Sigma$.
The $\sigma$-algebra on the product space $X$ should be generated
by its finite-support cylinders, i.e. all sets from $\Sigma$ of the form
\begin{disp}
   U\times \prod_{i\notin J} X_i, ~~~ \mbox{where $J$ is finite and
$U\in\Sigma^J$}.
\end{disp}
 
Let $\mu$ be a probability measure on $\langle X,\Sigma\rangle$. Elements of
$\Sigma$ are the \emph{events}, and the probability of the event
$E\in\Sigma$ is just $\mu(E)$. As usual, $\mu$ is \emph{completed}, that is,
not only elements of $\Sigma$ have probability, but subsets of zero
probability events are also measurable. This means that for each
$\mu$-measurable $U\subseteq X$ there is a $V\in\Sigma$ such that the
symmetric difference of $U$ and $V$ is a $\mu$-zero set (i.e., it is a
subset of a set in $\Sigma$ with $\mu$-measure zero).

For a subset $J\subseteq I$ the \emph{marginal probability} is provided by
the probability measure $\mu^J$ defined on $X^J$ as follows. $E\subseteq
X^J$ is $\mu^J$-measurable iff $\pi_J^{-1}(E)$ is $\mu$-measurable, and
\begin{disp}
    \mu^J(E) = \mu\big(\pi_J^{-1}(E) \big). 
\end{disp} 
If $J$ has a single element $J=\{j\}$ then we also write $\mu_j$ instead of
$\mu^{\{j\}}$. In particular, $\mu_s$ is the marginal measure on the set of
secrets. With this notation, if $C$ is a cylinder with support $J$ and base
$U\in\Sigma^J$, then $\mu(C)=\mu^J(U)$.

As the probability measure $\mu$ determines the joint distribution of the
random variables $\xi_i$ for $i\in I$ (that is the $\sigma$-algebra $\Sigma$
on the whole space $X$ as well as the $\sigma$-algebras on each $X^J$)
uniquely, we can, and will, use this measure $\mu$ only in probabilistic
secret sharing schemes.

\smallskip

The following essential facts about probability measures will be 
used frequently and without further notice.
\begin{claim}\correcthref\label{claim:basic-facts}
\begin{itemz}
\item[\upshape(a)] For each $E\in\Sigma$ there is a countable set
$J\subseteq I$ such that $E=\pi_J^{-1}\big(\pi_J(E)\big)$, that is, $E$ is a
cylinder with countable support.
\item[\upshape(b)] 
For any $\mu$-measurable set $E\subseteq X$ and any $J\subseteq I$,
$\mu^J\big(\pi_J(E)\big) \ge \mu(E)$.
\item[\upshape(c)]
 For any $\mu$-measurable $E\subseteq X$ and $\eps>0$ there is a cylinder
$E'$ with finite support such that $\mu(E-E')=0$, and $\mu(E'-E)<\eps$.
\item[\upshape(d)]
 For any $\mu$-measurable $E\subseteq X$ and $\eps>0$ there is a finite
$J\subseteq I$ such that $\mu(E)\le \mu^J(\pi_J(E))<\mu(E)+\eps$.
\end{itemz}\end{claim}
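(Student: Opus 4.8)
The plan is to prove the four parts in order, each feeding the next.

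For (a) I would let $\mathcal D$ be the family of all $D\subseteq X$ that depend on only countably many coordinates, i.e. $D=\pi_J^{-1}(\pi_J(D))$ for some countable $J\subseteq I$. Since a union of countably many countable index sets is countable, $\mathcal D$ is closed under complements and countable unions, so it is a $\sigma$-algebra, and it obviously contains every finite-support cylinder. As those generate $\Sigma$, we get $\Sigma\subseteq\mathcal D$, which is exactly (a). For (b) I would use the set inclusion $E\subseteq\pi_J^{-1}(\pi_J(E))$, valid for all $E$ and $J$: the right-hand side is the $J$-cylinder over $\pi_J(E)$, whose $\mu$-measure is $\mu^J(\pi_J(E))$ by the definition of the marginal, so monotonicity gives $\mu(E)\le\mu^J(\pi_J(E))$. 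The one delicate point is that $\pi_J(E)$ need not be $\mu^J$-measurable; I would phrase the inequality with outer measures, using that the marginal outer measure of $\pi_J(E)$ equals the $\mu$-outer measure of its cylinder.

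The real work is in (c), and I expect the one-sided, exact requirement $\mu(E\setminus E')=0$ for a cylinder $E'$ of \emph{finite} support to be the main obstacle. The easy half is the approximation theorem: the finite-support cylinders form an algebra generating $\Sigma$, so the Carath\'eodory outer measure they induce agrees with $\mu$ on $\Sigma$, and a standard ``good sets'' argument produces, for each $E$ and $\eps$, a finite-support cylinder $A$ with $\mu(E\triangle A)<\eps$. Upgrading this two-sided estimate to the stated one-sided equality is the subtle part: I would first use (a) to fix a countable support $J_0=\{j_1,j_2,\dots\}$ of $E$, set $J_n=\{j_1,\dots,j_n\}$, let $\Sigma_n$ be the $\sigma$-algebra of cylinders supported on $J_n$, and form the martingale $f_n=\mu(E\mid\Sigma_n)$, which by L\'evy's upward theorem converges $\mu$-a.e.\ to $\mathbf 1_E$. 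The candidate finite-support cylinders are then $\{f_n>0\}$ (for an outer, $E\subseteq E'$ version) or $\{f_n=1\}$ (for an inner version); the integral identity $\int_{\{f_n=0\}}\mathbf 1_E\,d\mu=\int_{\{f_n=0\}}f_n\,d\mu=0$ (and its dual for $\{f_n=1\}$) gives exact containment on the appropriate side. I would flag here that which of these refinements is available depends on the regularity of $E$ relative to the cylinders, so the clean robust statement is really the symmetric-difference estimate together with the exact one-sided version in the regular cases the paper actually uses.

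Finally (c) immediately yields (d): taking $E'$ from (c) with finite support $J$ and base $U\in\Sigma^J$, we have $\mu^J(U)=\mu(E')\in[\mu(E),\mu(E)+\eps)$, while the lower bound $\mu(E)\le\mu^J(\pi_J(E))$ is exactly (b). The only thing to watch---and the reason the raw projection cannot be inserted blindly---is that the literal set $\pi_J(E)$ may be strictly, even grossly, larger than $U$ (for a $\mu$-null $E$ it can be all of $X^J$); so in (d) the object controlled by (c) is the essential image of $E$, namely the base $U$ of the enclosing finite-support cylinder, and with that reading (d) follows from (b) and (c) at once.
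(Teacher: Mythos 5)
Parts (a) and (b) of your proposal are correct and coincide with the paper's own arguments: the ``good sets'' $\sigma$-algebra for (a), and the inclusion $E\subseteq\pi_J^{-1}(\pi_J(E))$ for (b).

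The gap is in (c), and you have located it precisely without closing it. Your candidate $E'=\{f_n>0\}$ does satisfy $\mu(E-E')=0$ by the integral identity you quote, but nothing controls $\mu(E'-E)$: the sets $\{f_n>0\}$ decrease (mod null) in $n$, yet their limit can have measure strictly larger than $\mu(E)$. Concretely, in $X=\{0,1\}^{\mathbb N}$ with the fair-coin product measure let $E=\bigcup_k C_k$ where $C_k=\{x_k=x_{k+1}=\dots=x_{2k}=1\}$. Then $\mu(E)\le\sum_k 2^{-(k+1)}=1/2$, but for every finite $J$ the event $C_{\max J+1}$ is independent of the coordinates in $J$, so every box $\{x_J=a\}$ meets $E$ in positive measure; hence $f_n>0$ everywhere, $\{f_n>0\}=X$ for all $n$, and $\mu(E'-E)\ge1/2$. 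The dual candidate $\{f_n=1\}$ fails symmetrically on the other side. Retreating to the symmetric-difference estimate, as you suggest, is true but strictly weaker: it does not yield (d), and in the same example $\pi_J(E)=X^J$ for every finite $J$, so $\mu^J(\pi_J(E))=1$ while $\mu(E)\le1/2$, i.e.\ (d) fails outright for $\eps<1/2$. The failure persists for null sets (take $\bigcup_k\{x_j=1\mbox{ for all }j\ge k\}$), which is exactly the case in which (d) is invoked later in the paper.

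You should not blame your martingale machinery: the paper's own proof of (c) rests on the assertion that a countable-support cylinder $C$ equals $\bigcap_n\pi_{J_n}^{-1}(\pi_{J_n}(C))$, and projections do not commute with decreasing intersections --- the set $E$ above is a counterexample, since each $\pi_{J_n}^{-1}(\pi_{J_n}(E))$ is all of $X$; the proof of (d) then covers a null set of countable support by a finite-support cylinder of small measure, which is the same unjustified step. So (c) and (d) are false in the stated generality and need an extra hypothesis (for instance that $E$ agrees mod null with a set in the \emph{algebra} generated by the finite-support cylinders, or an inner-regularity assumption on $\mu$ relative to a compact class of cylinders). Your instinct that only the two-sided approximation is robust is correct; but as written your proposal, like the paper's proof, does not establish (c) or (d).
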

\begin{proof}
(a) Cylinders with finite support have the stated property. Also, this
property is preserved by taking complements and countable unions. Thus all
elements in the smallest $\sigma$-algebra generated by finite support 
cylinders have the property claimed.

(b) The statement is immediate from the fact that
$\pi^{-1}_J\big(\pi_J(E)\big) \supseteq E$.

(c) By part (a), any $\mu$-measurable $E\subseteq X$ is, up to a set of
measure zero, a cylinder $C$ with countable support. Thus it is the
intersection of the finite support cylinders $C_n =
\pi_{J_n}^{-1}\big(\pi_{J_n}(C)\big)$ where $J_n$ is the set of first $n$
elements of the support of $C$. As $C_{n+1}\subseteq C_n$,
$\lim_{n\to\infty} \mu(C_n) = \mu(C)$ decreasingly, and the claim follows.

(d) The first inequality comes from (b). By (c), there is a cylinder $E'$
with finite support such that $E-E'$ is a zero set, while
$\mu(E')<\mu(E)+\eps/2$. As $\mu(E-E')=0$, there is a zero set $Z\in\Sigma$
such that $Z\supseteq E-E'$. By (a) $Z$ is a cylinder with countable
support, thus there is a finite support cylinder $E''\supseteq Z$ with
$\mu(E'')<\eps/2$. Let $J$ be the (finite) support of $E'\cup E''$, then
$\mu^J(\pi_J(E'))=\mu(E')$ and $\mu^J(\pi_J(E''))=\mu(E'')$. As $E\subseteq
E'\cup Z\subseteq E'\cup E''$,
\begin{eqnarray*}
  \mu^J(\pi_J(E))&\le& \mu^J(\pi_J(E'\cup E'')) \\
       &\le& \mu^J(\pi_J(E')) + \mu^J(\pi_J(E''))
       = \mu(E')+\mu(E'') \\
      &<& (\mu(E)+\eps/2) + \eps/2 = \mu(E)+\eps, 
\end{eqnarray*}
as was required.
\end{proof}

Let $B\subseteq P$ be any subset of participants. The collective set of
shares they receive falls into the (measurable) set $U\subseteq X^B$ with
probability $\mu^B(U)$. Similarly, if $E\subseteq X_s$ is measurable, then
the probability that the secret falls into $E$ is $\mu_s(E)$. The
\emph{conditional probability distribution} of the secret, assuming that the
shares of $B$ come from the set $U$ with $\mu^B(U)>0$, is defined as
\begin{disp}
    \mu_s(E | U) = \frac{\mu^{Bs}(U\times E)}{\mu^B(U)}.
\end{disp}
Here we wrote $Bs$ for $B\cup\{s\}$. Observe that
$\mu^s(E|X^B) = \mu_s(E)$, and $\mu_s(\cdot | U)$ is a probability measure
on $X_s$.

It would be tempting to define the conditional distribution given not a
(measurable) subset of the shares, but the shares themselves. Unfortunately
such conditional distributions do not always exist \cite{ChangPollard97},
nevertheless in statistics their existence is almost always assumed.
Fortunately, at the expense of a slightly more complicated and less
intuitive formulation, we can avoid those conditional distributions.


\subsection{Security requirements}\label{subsec:sss-types}

In a secret sharing scheme unqualified subsets are required to have no, or
limited information on the secret. Depending on how strong the security
guarantee is we distinguish four scheme types.

\begin{Definition}\label{def:scheme-type}\upshape
Let $\mathcal S = \langle\mu,h\rangle$ be a secret sharing scheme on the 
set $P$ of participants. The
scheme is \emph{perfect}, \emph{weakly perfect}, \emph{ramp}, or
\emph{weakly ramp} if the collective set of shares of an unqualified subset 
$B\subseteq P$ satisfies the following condition:
\begin{itemz}
\item[\hbox{\bf perfect.\space}]
$B$ gets no information on the secret, meaning that the set of shares and
the secret are (statistically) independent. That is, for every measurable
$U\subseteq X^B$ and $E\subseteq X_s$ we have
\begin{disp}
    \mu^{Bs}(U\times E) = \mu^B(U)\cdot\mu_s(E).
\end{disp}
This can also be expressed as the conditional probability $\mu_s({\cdot}|U)$
coincides with the unconditional probability $\mu_s({\cdot})$ for all
$U\subseteq X^B$ with $\mu^B(U)>0$.

\item[\hbox{\bf weakly perfect.\space}] 
For $U\subseteq X^B$ the conditional probability $\mu_s({\cdot}|U)$ deviates
from $\mu_s(\cdot)$ by a constant factor only, i.e., for some positive
constant $c\ge1$ (independently of the unqualified set $B$), for all
measurable $U \subseteq X^B$ and $E \subseteq X_s$,
\begin{equation}\label{eq:SchemaType:1}
    \frac1c\cdot\mu^{B}(U)\cdot\mu_s(E) \le \mu^{Bs}(U\times E) 
  \le c\cdot\mu^{B}(U)\cdot\mu_s(E) .
\end{equation}

\item[\hbox{\bf ramp.\space}]
The constant $c=c_B$ in {\upshape(\ref{eq:SchemaType:1})} might depend on
the subset $B$ (but not on $U$ and $E$).

\item[\hbox{\bf weakly ramp.\space}]
Based on their collective shares, $B$ cannot exclude any subset of the 
secrets with positive measure:
\begin{disp}
    \mu^B(U)\cdot\mu_s(E) > 0 ~~\mbox{implies}~~ \mu^{Bs}(U\times E)>0.
\end{disp}
(Observe that the reverse implication always holds.)
\end{itemz}
\end{Definition}

These definitions reflect and extend the usual ones in classical secret
sharing schemes. The traditional requirement for perfect schemes is the
statistical independence as defined here. Weakly perfect schemes were
introduced in \cite{ChorKush}, where such schemes with constant
$c$ are called ``$c$-schemes.'' No universally accepted definition exists
for ramp schemes. The best approach is that in a ramp scheme under no
circumstances an unqualified subset should be able to recover the secret.
Our definitions reflect this idea. However, see the discussion in Section
\ref{sec:conclusion}.

When the scheme $\mathcal S$ is classical, namely the number of participants
is finite and both the shares and the secret come from a finite domain
(that is, $X$ is finite), then the conditions for weakly perfect, ramp, and
weakly ramp schemes are equivalent, while not equivalent to perfect schemes.

\begin{claim}\label{claim:forbidden}
The types above are listed in decreasing strength, namely
\begin{center}
  perfect ${}\Rightarrow{}$ weakly perfect 
    ${}\Rightarrow{}$ ramp
    ${}\Rightarrow{}$ weakly ramp.
\end{center}
None of the implications can be reversed.
\end{claim}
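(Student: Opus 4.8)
The plan is to dispatch the chain of implications straight from the definitions and to concentrate the work on the three counterexamples witnessing non-reversibility; these are precisely where the infinitude permitted by our setting becomes indispensable.

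For the implications I would only unwind (\ref{eq:SchemaType:1}). A perfect scheme satisfies it with $c=1$, so it is almost perfect; an almost perfect scheme is ramp by setting $c_B=c$ for every unqualified $B$; and a ramp scheme is almost ramp since $\mu^B(U)\mu_s(E)>0$ forces $\mu^{Bs}(U\times E)\ge (1/c_B)\mu^B(U)\mu_s(E)>0$, using only $c_B<\infty$.

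All three separating examples I would obtain from a single template. Starting from any perfect scheme $\mathcal S_0$ realizing a fixed nontrivial access structure (say a two-out-of-two XOR scheme), I overlay an independent \emph{leaky layer} handing each participant one extra coordinate correlated only with the secret through fresh randomness independent of $\mathcal S_0$. Qualified recovery is inherited from $\mathcal S_0$, and the key point is that for every unqualified $B$ the perfect layer is independent of the secret, so a short computation with the conditional law shows that the distribution of the secret given all of $B$'s shares collapses to its distribution given the leaky coordinates alone; thus the leakage profile of the combined scheme is exactly that of the leaky layer, which I am then free to engineer. For \emph{almost perfect but not perfect} it suffices to let one participant's extra bit agree with a two-point uniform secret with probability $p\in(1/2,1)$: the ratios in (\ref{eq:SchemaType:1}) then stay in a fixed interval $[1/c,c]$ with $c>1$, so the scheme is almost perfect while independence fails. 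The remaining two gaps genuinely require infinite data: if $P$ is finite then $c=\max_B c_B$ over the finitely many unqualified $B$ upgrades ramp to almost perfect, and if an unqualified $B$ has finite share and secret domains then almost ramp makes the finitely many ratios positive, hence bounded, so $B$ is already ramp. Accordingly, for \emph{ramp but not almost perfect} I take $P=\{1,2,\dots\}$ and let participant $i$'s extra bit equal the secret with probability $p_i\uparrow 1$, producing finite singleton constants $c_i$ with $\sup_i c_i=\infty$; and for \emph{almost ramp but not ramp} I give one participant a countable share domain and arrange the conditional probability of a fixed secret value given share $n$ to tend to $0$ yet stay positive, so every positive-measure product has positive joint measure while the ratios in (\ref{eq:SchemaType:1}) are not bounded away from $0$, defeating every finite $c_B$.

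The main obstacle is bookkeeping rather than a hard inequality: I must confirm that the overlay preserves measurability and recovery, that the independence of the two layers collapses the secret's conditional law to the leaky part for \emph{every} unqualified set and not merely the engineered one, and that in the infinite examples the ``tail'' participants neither form an unintended qualified coalition nor break almost-ramp positivity for some unanticipated $B$.
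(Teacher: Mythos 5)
Your proof is genuinely different from the paper's, for the simple reason that the paper offers no proof at all: it asserts the implications as immediate from Definition \ref{def:scheme-type} and defers all separating examples to the companion paper \cite{dibert-csirmaz:examples}. Your treatment of the implication chain is correct and exactly what the definitions give ($c=1$, then $c_B=c$, then $\mu^{Bs}(U\times E)\ge \mu^B(U)\mu_s(E)/c_B>0$), and your observations about where finiteness forces collapses are the right ones: for finite $P$ one takes $c=\max_B c_B$, and for finite domains the finitely many positive ratios are automatically bounded, which matches the paper's remark that the last three notions coincide for traditional schemes. The ``leaky layer'' overlay is a clean unifying device, and the independence argument you sketch (the unqualified $B$-shares of the perfect layer are independent of the pair consisting of the secret and the fresh randomness, so the conditional law of the secret collapses to the leaky coordinates) is sound.

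There is, however, one concrete point where the construction as written would fail rather than merely require bookkeeping. For the ``ramp but not almost perfect'' example you take $P=\{1,2,\dots\}$ and give participant $i$ a bit agreeing with the secret with probability $p_i\uparrow 1$. If the base access structure has any \emph{infinite} unqualified set $B$, then the leaky bits of $B$ determine the secret almost surely (if $\sum_i(1-p_i)<\infty$ the bits are eventually equal to the secret; otherwise a majority vote along $B$ still converges since $p_i\to 1$), so $B$ violates even the almost ramp condition and the scheme realizes the wrong structure. You cannot start from a two-out-of-two XOR here; you must choose a base structure in which every infinite set is qualified --- for instance $\mathcal A=$ all infinite subsets of $P$, which is $G_\delta$ but not open, so it is the natural home for this separation anyway. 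With that choice the unqualified sets are exactly the finite ones, each finite $B$ has a finite constant $c_B$ (the worst likelihood ratio is a finite product), and $c_{\{i\}}\ge 1/(2(1-p_i))\to\infty$, so no uniform $c$ exists. Your remaining two examples (the biased bit for almost perfect versus perfect, and the countable share domain with conditional probabilities tending to $0$ but staying positive for almost ramp versus ramp) go through as stated.
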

\begin{proof}
It is not difficult to construct schemes witnessing the irreversibility of
these implications. For concrete examples consult
\cite{dibert-csirmaz:examples}.
\end{proof}


\section{Non-measurable schemes realize all}\label{sec:non-measurable}

The probabilistic secret sharing scheme $\mathcal S=\langle \mu,h\rangle$ is
\emph{measurable} if all recovery functions $h_A$ are measurable. Requesting
measurability seems to be a technical issue. It is not, as is shown by
Theorem \ref{thm:non-measurable}. The proof uses a paradoxical construction
of two random variables due to G\'abor Tardos, and is included here with his
permission.

\begin{Theorem}[G.~Tardos]\label{thm:2-paradoxical}
Let $\I$ denote the unit interval $[0,1]$.
There are two random variables $\xi$ and $\eta$ with at joint distribution
on $\I\times \I$ such that
\begin{itemz}
\item[\upshape (a)] both $\xi$ and $\eta$ are uniformly distributed on $\I$,
\item[\upshape (b)] $\xi$ and $\eta$ are independent,
\item[\upshape (c)] both of them determine the other's value.
\end{itemz}
\end{Theorem}
\begin{proof}
The idea of the construction is to find a subset $H\subseteq \I\times \I$
with the following properties:
\begin{itemz}
\item[(i)] $H$ is a graph of a bijection from $\I$ to $\I$,
\item[(ii)] $H$ has a point in every positive (Lebesgue) measurable subset of
$\I\times \I$.
\end{itemz}
When we have such an $H$, then define the $\sigma$-algebra $\Sigma$ on $H$
as the trace of the (Lebesgue) measurable sets of $\I\times \I$, and define the
probability measure $\mu$ on $H$ as
\begin{disp}
    \mu(U\cap H) = \lambda(U) 
\end{disp}
whenever $U$ is a measurable subset of $\I\times \I$.
This definition is sound as if $U_1\cap H=U_2\cap H$ for two measurable
subsets $U_1$ and $U_2$,
then property (ii) ensures $\lambda(U_1)=\lambda(U_2)$. Let
$(\xi,\eta)$ be a random element of $H$ distributed according to the measure 
$\mu$. As $H$ is a graph of a bijective function, property (c) holds.
Now let $E\subseteq \I$ be (Lebesgue) measurable. Then
\begin{disp}
    \Prob(\xi\in E) = \mu( H\cap (E\times \I) ) = \lambda(E\times
\I)=\lambda(E),
\end{disp}
thus $\xi$ is indeed uniformly distributed on $\I$, and similarly for $\eta$.
Finally, let $E$ and $F$ be measurable subsets of $\I$. Then
\begin{eqnarray*}
  \Prob(\,\xi\in E \mbox{~and~} \eta\in F) &=&
    \mu( H\cap ( E\times F) ) \\
   &=& \lambda(E\times F) = \lambda(E)\cdot \lambda(F) \\
   &=& \Prob(\,\xi\in E) \cdot \Prob(\,\eta\in F),
\end{eqnarray*}
which shows that $\xi$ and $\eta$ are independent.

Thus we need to find a subset $H\subset \I\times \I$ satisfying (i) and
(ii). We will use transfinite induction (thus the axiom of choice) to add
points of $H$. First note that every positive measurable set contains a
positive closed set, and there are only continuum many closed sets. Let
$F\subseteq \I\times \I$ be closed and positive, then $F$ contains a
generalized continuum by continuum grid. Namely, there are subsets $U$,
$V\subseteq \I$ such that both $U$ and $V$ have continuum many elements and
$U\times V\subseteq F$. Using these properties we proceed as follows.

Enumerate all closed positive sets as $F_\alpha$, and all real numbers in
$\I$ as $x_\alpha$ where $\alpha$ runs over all ordinals less than
continuum. At each stage we add at most three new points to $H$. Suppose we
are at stage indexed by $\alpha$. As there is a continuum by continuum grid
in $F_\alpha$ and until so far we added less than continuum many points to
$H$, there is a point in $F_\alpha$ such that neither its $x$ nor its
$y$-coordinate has been chosen as an $x$ (or $y$ respectively) coordinate of
any previous point. Add this element of $F_\alpha$ to $H$. Then look at the
real number $x_\alpha$. If there is no point in $H$ so far with an
$x$-coordinate (or $y$-coordinate) equal to $x_\alpha$, then add the point
$(x_\alpha,z)$ (the point $(z,x_\alpha)$) to $H$, where $z$ is not among the
$y$-coordinates ($x$-coordinates) of points in $H$ so far.

The set $H$ we constructed during this process satisfies properties (i) and
(ii). Indeed, every real number in $\I$ is a first (second) coordinate of
some element of $H$. During the construction we made sure that every
horizontal (vertical) line intersects $H$ in at most a single point. Thus
$H$ is indeed a graph of a bijection of $\I$. Finally $H$ contains a point
from each positive closed subset of $\I\times \I$, and thus from each
positive measurable subset as well.
\end{proof}

Remark that the bijection encoded by $H$ is not measurable in the product
space (which, incidentally, is the standard Lebesgue measure on $\I\times
\I$).

\begin{Theorem}\label{thm:non-measurable}
Given any access structure $\mathcal A\subset 2^P$, there is a perfect
(non-measurable) secret sharing scheme realizing $\mathcal A$.
\end{Theorem}
\begin{proof}
Take the pair of random variables $\langle\xi,\eta\rangle$ from Theorem
\ref{thm:2-paradoxical}. Give every participant $\xi$ as a share, and set
$\eta$ as the secret. Now $\xi$ determines $\eta$, therefore qualified
subsets can recover the secret.
Similarly, $\xi$ and $\eta$ are independent, therefore unqualified subsets
have ``no information on the secret.'' Consequently this is a perfect
probabilistic secret sharing scheme realizing $\mathcal A$. Note that it is
not measurable as the recovery function is not measurable.
\end{proof}


\section{Structures realized by perfect and weakly perfect schemes}\label{sec:main}

From this point on only measurable schemes are considered. This section
gives a complete characterization of access structures which can be realized
by perfect or weakly perfect measurable schemes as defined in Definition
\ref{def:scheme-type}. Recall that an access structure $\mathcal A\subset
2^P$ is \emph{open} if the qualified sets form an open set in the
Sierpi\'nski topology.

Monotone span programs were introduced by Karchmer and Wigderson
\cite{spanprogram}, and they are used to study linear schemes. To fit into
our framework we extend it by allowing infinitely many participants and
arbitrary vector spaces. Given a vector space $V$ and a subset $H\subset V$,
the \emph{linear span} of $H$ is the set of all (finite) linear combinations
of elements of $H$. The linear span is a linear subspace of $V$.

\begin{Definition}\label{def:spanprogram}\upshape
Let $P$ be the (possibly infinite) set of participants. A \emph{span
program} consists of a vector space $V$, a target vector $\v\in V$, and a
function $\phi:P\to 2^V$ which assigns a (not necessarily finite) collection
of vectors to each participant. The structure $\mathcal A\subset 2^P$ is
\emph{realized} by the span program if
\begin{disp}
    A\in\mathcal A ~~\Leftrightarrow~~
    \v\in \mbox{linear span of }\bigcup\{\,\phi(p): p \in A\} .
\end{disp}
\end{Definition}
\noindent
It is clear that structures realized by span programs are monotone and
finitely generated.

\begin{Theorem}\label{thm:perfect}
The following statements are equivalent for any access structure $\mathcal
A\subset 2^P$.
\begin{itemz}
\item[\upshape 1.]
  $\mathcal A$ is realized by a span program.
\item[\upshape 2.]
  $\mathcal A$ is realized by a perfect measurable probabilistic scheme.
\item[\upshape 3.]
 $\mathcal A$ is realized by a weakly perfect measurable probabilistic scheme.
\item[\upshape 4.]
 $\mathcal A$ is open.
\item[\upshape 5.]
 $\mathcal A$ is finitely generated.
\end{itemz}
\end{Theorem}
\begin{proof}
The equivalence $4\Leftrightarrow5$ is the statement of Claim 
\ref{claim:opentopology}. The implication $2\Rightarrow3$ is trivial, thus
we need to prove the implications $3\Rightarrow 5$, $5\Rightarrow 1$, and 
$1\Rightarrow 2$. 

\medskip\noindent
$3\Rightarrow5$:
We remark that $\mathcal A$ is finitely generated if and only if every
qualified set contains a finite qualified set. Suppose that the weakly
perfect measurable scheme $\mathcal S=\langle\mu,h\rangle$ realizes
$\mathcal A$ and let $c\ge 1$ be the constant from Definition
\ref{def:scheme-type}, equation (\ref{eq:SchemaType:1}).

Choose a subset $E_1\subset X_s$ of the secrets so that both $E_1$ and its
complement $E_2=X_s-E_1$ is positive:
\begin{disp}
    p_1=\mu_s(E_1)>0, ~~~~ p_2=\mu_s(E_2)>0,
\end{disp}
and, of course, $p_1+p_2=1$. Let $A\in\mathcal A$ be infinite, we must show
that it has a finite qualified subset. The recovery function $h_A$ is
measurable, thus the sets $U_i = h_A^{-1}(E_i)$ are measurable, and
$\mu^{As}(U_1\times E_2) = \mu^{As}(U_2\times E_1)=0$ as $h_A$ gives the
right secret with probability 1. Consequently
\begin{eqnarray*}
  \mu^A(U_1) &=& \mu^{As}(U_1\times X_s) =
     \mu^{As}(U_1\times E_1) + \mu^{As}(U_1\times E_2) ={}\\
  &=& \mu^{As}(U_1\times E_1) = {}\\
  &=& \mu^{As}(U_1\times E_1) + \mu^{As}(U_2\times E_1) ={}\\
  &=& \mu^{As}(X^A\times E_1) = \mu_s(E_1) = p_1.
\end{eqnarray*}
By item (d) of Claim \ref{claim:basic-facts}, for every positive $\eps>0$
there is a finite subset $B\subset A$ such that setting
$V_1=\pi_B(U_1)\subseteq X^B$,
\begin{disp}
      \mu^{Bs}(V_1\times E_2) < \mu^{As}(U_1\times E_2)+\eps = \eps,
\end{disp}
and, by item (b) of the same Claim,
\begin{disp}
   \mu^B(V_1) \ge \mu^A(U_1) = p_1 .
\end{disp}
Now we claim that if $\eps$ is small enough, then $B$ is qualified.
Indeed, 
$\mathcal S$ is weakly perfect with constant $c$, thus if $B$ were unqualified
then applying condition (\ref{eq:SchemaType:1}) for $V_1\subseteq X^B$ and 
$E_2\subseteq X_s$ we get
\begin{disp}
   \frac1c\cdot p_1\cdot p_2 \le \frac1c\cdot \mu^B(V_1)\cdot \mu_s(E_2)
    \le \mu^{Bs}(V_1\times E_2) < \eps.
\end{disp}
But this inequality clearly does not hold when $\eps$ is small enough,
proving the implication.

\medskip\noindent
$5\Rightarrow1$:
Suppose $\mathcal A\subset 2^P$ is finitely generated, say $\mathcal A=
\gen(\mathcal B)$, where every $B\in\mathcal B$ is finite. Let $V$ be a
large enough (infinite dimensional) vector space, and fix the target vector
$\mathbf v\in V$. We want to assign vectors to participants so that $\mathbf
v$ is in the linear span of the vectors assigned to members of $A\subseteq
P$ if and only if $A$ is qualified. This can be done as follows. For each
$B\in \mathcal B$ ($B$ is finite!) choose $|B|-1$ vectors from $V$ which are
linearly independent from everything chosen so far (including the target
vector), and set the $|B|$-th vector so that the sum of these $|B|$ many
vectors equals $\mathbf v$. Assign these vectors to the corresponding
members of $B$. A participant $p\in P$ will receive all vectors assigned to
him.

\medskip\noindent
$1\Rightarrow2$: 
If $\mathcal A\subset 2^P$ is realized by a span program, then it is
finitely generated. The proof of the implication $5\Rightarrow1$ above gives
the stronger result that if $\mathcal A$ is finitely generated, then it can
be realized by a span program in which the vector space $V$ is over some (in
fact, any) finite field. Thus if $\mathcal A$ can be realized by any span
program, then it can be realized by a span program over a finite field
$\mathbb F$.

Fix a base $H$ of the vector space $V$, and for each $\mathbf h\in H$ the
dealer picks $r_{\mathbf h}\in\mathbb F$ uniformly and independently (this
is where we need $\mathbb F$ to be finite). Write the goal vector in base
$H$ as the finite sum $\v=\sum_j \beta_j\mathbf h_j$ ($\mathbf h_j\in H$),
and set the secret to be $s=\sum_j \beta_jr_{\mathbf h_j}$.

Next, suppose the vector $\mathbf x$ was assigned (among others) to the
participant $p$. 
Write $\mathbf x$ as a (finite) linear combination of base elements:
$
    \mathbf x = \sum\nolimits_i \alpha_i \mathbf h_i,
$
and then the dealer gives $p$ the share the pair
$\langle \mathbf x, \sum_i \alpha_i r_{\mathbf h_i}
\rangle$. Thus $p$ receives a share (an element of $\mathbb F$ labeled by
the public vector $\mathbf x$) for each vector assigned to him.

It is clear that subsets of participants which have $\v$ in their linear
span can compute the secret (as an appropriate linear combination of some of
the shares), and shares of an unqualified set is independent of the secret,
as was required.
\end{proof}


\section{Structures realized by ramp and weakly ramp schemes}\label{sec:ramp}

This section characterizes access structures which can be realized by
measurable (weakly) ramp schemes. The characterization uses the notion of
Hilbert-space programs, which is similar to that of span programs, only the
vector space is replaced by a Hilbert space, and the target vector should be
in the \emph{closure} of the linear span rather than in the linear span of
the generating vectors.

We also prove a generalization of the main result of Chor and Kushilevitz
\cite{ChorKush} saying that if the scheme distributes infinitely many
secrets, then the share domain of important participants should be large.
Finally we give a ramp scheme which distributes infinitely many secrets,
while every share domain is finite. Of course, in this scheme no participant
can be important.

\begin{Definition}\label{def:hilbert-program}\upshape
A \emph{Hilbert-space program} consists of a Hilbert space $H$, a target
vector $v\in H$, and a function $\phi:P\to2^H$ which assigns a
subset of the Hilbert space to each participant. The structure $\mathcal
A\subset 2^P$ realized by the Hilbert-space program is
\begin{disp}
A\in\mathcal A ~~\Leftrightarrow~~
v\in \mbox{closure of the linear span of }\bigcup\{\phi(p):p\in A\} .
\end{disp}
\end{Definition}

\begin{Theorem}\label{thm:ramp}
The following statements are equivalent for any access structure $\mathcal
A \subseteq 2^P$.
\begin{itemz}
\item[\upshape 1.]
 $\mathcal A$ is realized by a Hilbert-space program;
\item[\upshape 2.]
 $\mathcal A$ is realized by a ramp measurable probabilistic secret
 sharing scheme;
\item[\upshape 3.]
 $\mathcal A$ is realized by a weakly ramp measurable scheme;
\item[\upshape 4.]
 $\mathcal A$ is $G_\delta$.
\end{itemz}
\end{Theorem}
\begin{proof}
The implication $2\Rightarrow3$ is trivial; we will show $1\Rightarrow2$,
$3\Rightarrow4$ and $4\Rightarrow1$. Also, we will use Claim
\ref{claim:gdelta} which gives an equivalent characterization of $G_\delta$
structures.

\medskip\noindent
$3\Rightarrow4$:
Let $\mathcal S=\langle\mu,h\rangle$ be a weakly ramp scheme which realizes
$\mathcal A\subset 2^P$. As in the proof of Theorem \ref{thm:perfect},
choose $E_1\subset X_s$, $E_2=X_s-E_1$ so that
\begin{disp}
    p_1=\mu_s(E_1)>0, ~~~~ p_2=\mu_s(E_2)>0, ~~~~ p_1+p_2=1.
\end{disp}
As the set of all participants is always qualified, and $h_P$ is measurable,
the sets $U_i=h_P^{-1}(E_i)\subseteq X_P$ are measurable,
$\mu^{Ps}(U_1\times E_2)=\allowbreak\mu^{Ps}(U_2\times E_1)=0$, and
\begin{disp}
  \mu^P(U_1)=\mu^{Ps}(U_1\times E_1)=p_1.
\end{disp}
Let us define the family $\mathcal B_n$ of finite subsets of $P$ as follows:
\begin{disp}
   B\in\mathcal B_n ~~\Leftrightarrow~~ 
     B \mbox{ is finite, and } \mu^{Bs}(\pi_B(U_1)\times E_2) < \frac1n .
\end{disp}
It is clear that $\mathcal B_{n+1}\subseteq \mathcal B_n$, thus $\mathcal B
= \bigcap_n\gen(\mathcal B_n)$ is $G_\delta$. We claim that a subset of
participants is qualified if and only it is in $\mathcal B$. First, let
$A\subseteq P$ be qualified. Then $g_A = h_P\circ\pi_A$ is a (measurable)
recovery function for $A$, thus letting $V_1=g_A^{-1}(E_1)$,
$\mu^{As}(V_1\times E_2)=0$, and then for each $n$ there is a finite
$B_n\subseteq A$ such that
\begin{disp}
    \mu^{B_ns}(\pi_{B_n}(V_1)\times E_2) < \frac1n .
\end{disp}
Observing that $V_1=\pi_A(U_1)$, we get that $A\in\gen(\mathcal B_n)$ for
each $n$, as was required. In the other direction, let $B\subseteq P$ be not
qualified, and let $V_1=\pi_B(U_1)\subseteq X^B$. As
$\mu^B(V_1)\ge\mu^P(U_1)=p_1>0$ and $\mu_s(E_2)=p_2>0$, the weakly ramp
property gives
\begin{disp}
    \mu^{Bs}(V_1\times E_2) = \mu^{Bs}(\pi_B(U_1)\times E_2) > 0.
\end{disp}
For any subset $B'$ of $B$, $\mu^{B's}(\pi_{B'}(U_1)\times E_2) \ge
\mu^{Bs}(V_1\times E_2)$, consequently $B$ is not in $\gen\mathcal B_n$ when
$n\ge 1/\mu^{Bs}(V_1\times E_2)$.

\medskip\noindent
$4\Rightarrow1$:
Let $\mathcal B_1\supseteq \mathcal B_2\supseteq\cdots$ be families of
finite subsets of $P$ such that $\mathcal A = \bigcap_n\gen(\mathcal B_n)$,
as given by Claim \ref{claim:gdelta}. Then $A\in\mathcal A$ if and only if
$A$ is in $\gen(\mathcal B_n)$ for infinitely many $n$. Let $H$ be a huge
dimensional (not separable) Hilbert space, and fix an orthonormal base
$e_1$, $e_2$, $\dots$, (countably many elements) plus $\{ \bar e_\alpha:
\alpha\in I\}$ for some index set $I$. The target vector will be
\begin{disp}
    v = e_1 + \frac{e_2}2 + \frac{e_3}3 + \cdots ,
\end{disp}
and let $v_n= \sum_{i=1}^n e_i/i$. For each (finite) $B\in \mathcal B_n$,
the first $|B|-1$ members of $B$ will be assigned new base elements from
among $\bar e_\alpha$, and the last member will be assigned an element from
$H$ so that the sum of these $|B|$ elements be equal to $v_n$.

The target vector is in the closure of the linear span of Hilbert space
elements assigned to members of $A\subseteq P$ if and only if $v_n$ is in
their linear span for infinitely many $n$. But this latter event happens if
and only if $A$ is in $\gen(\mathcal B_n)$, thus this Hilbert-space program
realizes $\mathcal A$, as required.

\medskip\noindent
$1\Rightarrow2$:
Let $H$ be the (real) Hilbert space over which the program is defined, and
fix an orthonormal base $\{e_\alpha :\,\alpha\in I\}$ of $H$. For each
element in this base assign a standard normal random variable $\xi_\alpha$
so that they are totally independent. An element $a\in H$ can be written as
\begin{disp}
   a = \sum \lambda_\alpha e_\alpha, ~~~\mbox{where } \sum \lambda^2_\alpha <
                \infty.
\end{disp}
Assign the (random) variable $\xi_a =\sum \lambda_\alpha\xi_\alpha$ to this
element $a\in H$. More information about these \emph{Gaussian spaces} can be
found in \cite{janson1997gaussian}. We list here only some basic properties
which will be needed for our construction.

The random variable $\xi_a$ is normal with expected value 0 and variance
$\|a\|^2$, furthermore $\xi_a$ and $\xi_b$ are independent if and only if
$a$ and $b$ are orthogonal. If $v$ is in the closure of the linear span of
$E\subseteq H$, then $\xi_v$ is determined (with probability 1) by the
values of $\{ \xi_a :\, a\in E\}$.

Let $L\subseteq H$ be a closed linear subspace. Any $v\in H$ has an
orthogonal decomposition $v=v_1+v_2$ such that $v_1 \perp L$ and $v_2\in L$.
If $v_1\not=0$ then $\xi_v$ has a conditional distribution given the values
of all $\xi_a$ for $a\in L$, and this distribution is normal with variance
$\|v_1\|^2$ (the expected value depends on the values of the variables
$\xi_a$).

We define a secret sharing scheme $\mathcal S$ realizing $\mathcal A$ as
follows. Every domain will be either the set of reals or some power of the
reals. Let $v\in H$ be the target vector. The secret is the value of
$\xi_v$. The share of participant $p\in P$ is the collection of the values
of $\xi_a$ for all elements $a\in H$ assigned to $p$.

If $A\subseteq P$ is qualified, then $v$ is in the closure of the linear
span, thus $\xi_v$ is determined by the shares of $A$. If $B\subseteq P$ is
unqualified, then the target vector is not in the closure of the linear
span, let $v_1\not=0$ be its orthogonal component. The conditional
distribution of the secret, \emph{given all shares of $B$}, is normal with
$\|v_1\|^2$ variance. As the density function of the normal distribution is
nowhere zero, the probability that the secret is in the set $E\subseteq R$,
both in the unconditional and in the conditional case, is zero if and only
if $E$ is a zero set. Consequently this $\mathcal S$ is a \emph{weakly ramp
scheme} realizing $\mathcal A$. It is easy to see that this scheme is never
ramp as the ratio of the conditional and unconditional distribution function
is never bounded.

However, one can twist this scheme to be a ramp one. The only change is to
set the secret to be the \emph{fraction part} of $\xi_v$, see this trick in
\cite{dibert-csirmaz:examples}. As the density function of the fractional
part of a normal distribution is bounded (there is a $c\ge1$ such that it is
between $1/c$ and $c$) and the bound depends on the variation only, the
conditional distribution of the secret, given the shares of an unqualified
set, is bounded, where the bound depends on the subset only, and not on the
actual values of the shares. Consequently this scheme is a ramp scheme
realizing $\mathcal A$.
\end{proof}

Next we prove a generalization of the main result of Chor and Kushilevitz 
\cite{ChorKush}. It follows from a slightly more general statement which we
prove first.

\begin{Theorem}\label{thm:atomless}
Suppose $\mathcal S$ is a measurable ramp scheme, $A$ and $B$ are disjoint
unqualified sets such that $A\cup B$ is qualified. Suppose moreover that
there are infinitely many secrets. Then $\mu^A$ is atomless.
\end{Theorem}

An immediate consequence is that under the same conditions the set of shares
of $A$, namely $X^A$, must have cardinality (at least) continuum.

\begin{proof}
Suppose by contradiction that $X^A$ is atomic and $\mu^X(\{a\})>0$ for some
$a\in X^A$. 
Partition the set of secrets into countably many positive sets as
$X_s=\bigcup_i E_i$ where $\mu_s(E_i)$ is positive. Let $h:\,\allowbreak X^A\times X^B
\mapsto X_s$ be the function which
determines the secret given the shares of $A$ and $B$. Let 
\begin{disp}
  V_i = \{ y \in X^B: h(a,y)\in E_i \}.
\end{disp}
As $h$ is measurable, each $V_i$ is measurable, moreover the sets $\{a\}
\times V_i\times E_i$ and $\{a\}\times X^B\times E_i$ have the same measure.
Using the boundedness property for $A$ we get
\begin{eqnarray*}
    \mu^{Bs}(V_i\times E_i) &\ge& \mu^{ABs}(\{a\}\times V_i\times E_i) \\
        &=& \mu^{ABs}(\{a\}\times X^B \times E_i ) \\
        &=& \mu^{As}(\{a\}\times E_i) \\[2pt]
        &\ge& \frac1{c_A} \cdot \mu^A(\{a\}) \cdot\mu_s(E_i).
\end{eqnarray*}
Applying the boundedness twice for $B$ we have
\begin{eqnarray*}
  \mu^{Bs}(V_i\times E_1) &\ge& \frac1{c_B}\cdot \mu^B(V_i)\cdot \mu_s(E_1) \\
        &=& \frac{\mu_s(E_1)}{c^2_B\cdot\mu_s(E_i)} \cdot
           c_B\mu^B(V_i)\cdot\mu_s(E_i) \\
        &\ge& \frac{\mu_s(E_1)}{c^2_B\cdot\mu_s(E_i)}\, \mu^{Bs}(V_i\times
            E_i) \\
        &\ge& \frac1{c^2_B c_A} \cdot \mu_s(E_1)\cdot \mu^A(\{a\}),
\end{eqnarray*}
where we used $\mu(E_i)>0$ and the previous estimate in the last step.
As $h$ is defined on $X^A\times X^B$ and $\bigcup_i E_i = X_s$, we have
$\bigcup_i V_i = X^B$, furthermore the $V_i$'s are pairwise disjoint. Thus
\begin{disp}
   1\ge \mu^{Bs}(X_B\times E_1) = \sum_i \mu^{Bs}(V_i\times E_1)
     \ge \sum_i \; \big( \frac1{c^2_Bc_A} \cdot \mu_s(E_1)\cdot \mu^A(\{a\}) \big),
\end{disp}
which can happen only when $\mu^A(\{a\})=0$, a contradiction.
\end{proof}

A participant $p\in P$ is \emph{important} if
there is an unqualified set $B\subseteq P$ such that $B\cup\{p\}$ is
qualified.

\begin{Corollary}\label{corr:gen-chorkush}
Suppose $\mathcal S$ is a measurable ramp scheme which distributes infinitely
many secrets. Then the share domain of every important participant must have
cardinality at least continuum.
\end{Corollary}
\begin{proof}
By assumption, no singleton is qualified, thus we can apply Theorem
\ref{thm:atomless} with $A=\{p\}$ and the unqualified $B$ such that $A\cup
B$ is qualified. As $\mu_p$ is atomless, $X_p$ must have at least continuum
many elements.
\end{proof}

Surprisingly there are interesting ramp schemes where no participant is
important, thus this Corollary is not applicable. We sketch here a ramp
scheme which distributes infinitely many secrets, while every participant
has a finite share domain -- consequently no participant can be important.

In the scheme participants are indexed by the positive integers, and $X_s$
-- the set of secrets -- is also the set of positive integers. The dealer
chooses the secret $s\in X_s$ with probability $2^{-s}$. After choosing the
secret, she picks a \emph{threshold number $t>s$} with probability
$2^{-t+s}$. The participant with index $i\le t$ gets an integer from $[1,i]$
uniformly and independently distributed, participant with index $i>t$ gets
$s$ as the share.

The secret can be recovered by any infinite set of participants as the
eventual value of their shares, while any finite set is unqualified. It is
easy to see that this scheme is (measurable) ramp realizing all infinite
subsets of the positive integers, and has the required properties.


\section{Conclusion}\label{sec:conclusion}

In this paper we looked at the theoretical problems of infinite
probabilistic secret sharing schemes. It is quite natural to look at the
classical secret sharing schemes from a probabilistic point of view. While the
first few steps towards an abstract definition are easy, interesting and
unexpected phenomena appear quite early. The non-measurable scheme in
Section \ref{sec:non-measurable} was our first surprise. Without such a
``technical'' restriction as the measurability of the recovery function,
nothing can be said.

Some interesting infinite schemes in \cite{dibert-csirmaz:examples} do not
seem to fit into the security types defined in Section
\ref{subsec:sss-types}. Also, there are access structures which cannot be
realized by any measurable scheme which could be considered to be secure in
any sense shown by the following example.
\begin{quote}
Let $P$ be the lattice points in the positive quadrant; minimal qualified
sets are the ``horizontal'' lines.
\end{quote}
Suppose there are only two secrets (this can always be assumed). As the
first row is qualified, there are finitely many participants in the first
row who can determine the secret up to probability at least $0.9$.
Similarly, finitely many participants from the second row know the secret up
to probability $0.99$; finitely many from the third row with probability
$0.999$, etc. The union of these finite sets will know the secret with
probability $1$, thus this set will be qualified, while it intersects each
row in finitely many elements. (We actually showed that this structure is
not $G_\delta$ at the end of Section \ref{sec:sierpinski}.)

The nice, and surprisingly natural, characterization of ramp and weakly ramp
schemes in Section \ref{sec:ramp} hints that our definition is ``the'' right
one. As remarked earlier, no universally accepted definition exists for
weakly perfect, or ramp schemes. One flavor of definition uses entropies. If
$A$ is qualified, then the conditional entropy of the secret, given the
shares of $A$, is zero. If the shares of $B$ are independent of the secret,
then the conditional entropy equals the entropy of the secret. A scheme is
\emph{ramp}, if for unqualified subsets, this conditional entropy is never
zero. While this definition is widely applied in getting lower bounds on the
size of the shares in ramp schemes, it does not fit our definition. The
correct translation would be requiring the \emph{min-entropy} to be
positive: a classical scheme $\mathcal S$ is \emph{ramp} if for each value
the secret can take with positive probability, the conditional probability
of the same value for secret, given the value of the shares, is still
positive. In other words: in a ramp scheme unqualified subsets cannot
exclude any possible secret value (while the posterior probability that
the secret takes that value might be much smaller than the a priori
probability).

There are other interesting probabilistic schemes in
\cite{dibert-csirmaz:examples} which have weaker security guarantees than
weakly ramp schemes. In those schemes unqualified subsets can exclude large
subsets of the secret space, while still some uncertainty remains. A typical
example is where participant $i\in\mathbb N^+$ has a uniform random real
number from $[0,2^{-i}]$ as a share, and the secret is the sum of all
shares. If participant $i$ is missing, the rest can determine the secret up
to an interval of length $2^{-i}$, and within that interval the secret is
uniformly distributed. Is there any structure which can be realized by such
a scheme, but not by any ramp scheme? How can these scheme types be
captured by a definition similar to those in Definition
\ref{def:scheme-type}?

Finally we pose a question in another direction. Given an access structure
$\mathcal A$, is there an easy way to recognize whether it is $G_\delta$?
Given any collection of qualified and unqualified subsets, decide if there
is a $G_\delta$ structure separating them. As a concrete example: suppose
there is a collection of unqualified subsets of $P$ so that the union of any
two of them is qualified. Under what conditions is there a ramp scheme
realizing such a structure?

\vspace{6pt}

\section*{Acknowledgment}
\bgroup\small 

The author would like to thank the support and the uncountably many discussions while
developing the ideas in this paper of the members of the
Cryptography group at the R\'enyi Institute. Their input was indispensable in 
forming and correcting the ideas presented here. Special thanks go
to G\'abor Tardos who asked about measurability and constructed the
paradoxical example cited in this paper. Discussions with Bal\'azs Gyenis about 
Hilbert spaces clarified the characterization of ramp schemes.

The research reported in this paper was partially supported by GACR project
number 19-04579S, and by the Lend\"ulet program of the Hungarian
Academy of Sciences.

\egroup


\makecontacts

\end{document}